\documentclass[10pt,conference]{IEEEtran}

\usepackage{cite}
\usepackage{amsmath,amssymb,amsfonts}
\usepackage{algorithmic}
\usepackage{algorithm}
\usepackage{needspace}
\usepackage{booktabs}
\usepackage{makecell}
\usepackage{graphicx}
\usepackage{textcomp}
\usepackage{xcolor}
\usepackage{multirow}
\usepackage{booktabs}
\usepackage{lscape}
\usepackage{hologo}
\usepackage{amsthm}
\usepackage{enumitem}
\usepackage{geometry}
\usepackage{booktabs}
\usepackage[caption=false]{subfig}
\newtheorem{theorem}{Theorem}

\usepackage{mathtools}

\let\oldthebibliography\thebibliography
\let\endoldthebibliography\endthebibliography
\renewenvironment{thebibliography}[1]{%
  \small
  \oldthebibliography{#1}%
  \setlength{\itemsep}{0pt plus 0.3ex}%
}{\endoldthebibliography}

\usepackage{titlesec}
\titlespacing*{\section}{0pt}{8pt plus 2pt minus 2pt}{4pt plus 2pt minus 2pt}

\AtBeginDocument{
  \setlength\abovedisplayskip{4pt}
  \setlength\belowdisplayskip{4pt}
  \setlength\abovedisplayshortskip{2pt}
  \setlength\belowdisplayshortskip{2pt}
}

\setlist{nosep, leftmargin=*, topsep=2pt, partopsep=0pt}
\vspace{-0.9cm}
\begin{document}\title{\vspace{-0.9cm}Predictive Importance Sampling Based Coverage Verification for Multi-UAV Trajectory Planning}
% \begin{document}\title{Continuous Coverage Verification for Multi-UAV Trajectory Planning Using Predictive Importance Sampling}
\author{
\IEEEauthorblockN{Snehashish Ghosh and Sasthi C. Ghosh}
\IEEEauthorblockA{
Advanced Computing and Microelectronics Unit, Indian Statistical Institute, Kolkata, India\\
Email: cs2431@isical.ac.in, sasthi@isical.ac.in
}
}
% Note: 
% 1) Add SAC reference 2)make the result texts enlarged 3) HO plot needs to be redesigned. 4) UAV predicts user NLoS in upcoming $\Delta t$ time
% 5) give reference after snapshot in introduction. 6)more referrences in related work section. 7)Past info motivation before 'plenty of research'
% 7) Motivate intra timestep, but multi UAV(in related work -abid, siddhanta) and their sampling method is in-efficient.
% 8)use output of algo 1 to algo 2 9) show multi-UAV coordination, algo 2 is for 1 uav only, how different UAVs are coordinated(how actor critic works to ensure) 10) how R(t) is used for multi-uav coordination, how computation of centroid is used there

% 11) Need to recheck the computational complexity
% 12) Algo 2 complexity.
% 13) write explicitly, how continuous movement => rare event Monte Carlo simulation
% \\
% \\
% Doubt: 
% 1) For uniform distribution should we write p(z) or simply U?
% 2) normalized reward-penalty sign should be '+' or '-' 3) what should be the value of $\Delta_t$?

\maketitle

\begin{abstract}
Unmanned aerial vehicle (UAV) networks are emerging as a promising solution for ultra-reliable low-latency communication (URLLC) in next-generation wireless systems. A key challenge in millimeter wave UAV networks is maintaining continuous line of sight (LoS) coverage for mobile users, as existing snapshot-based trajectory planning methods fail to account for user mobility within decision intervals, leading to catastrophic coverage gaps. Standard uniform sampling for continuous coverage verification is computationally prohibitive, requiring huge number of samples to estimate rare failure events with latencies incompatible with real-time requirements. In this work, we propose a predictive importance sampling (PIS) framework that drastically reduces sample complexity by concentrating verification efforts on predicted failure regions. Specifically, we develop a long short-term memory mixture density network (LSTM-MDN) architecture to capture multimodal user trajectory distributions and combine it with defensive mixture sampling for robustness against prediction errors. We prove that PIS provides unbiased failure probability estimates with lower variance than uniform sampling. We then integrate PIS with multi-agent deep deterministic policy gradient (MADDPG) for coordinated multi-UAV trajectory planning using an adaptive multi-objective reward function balancing throughput, coverage, fairness, and energy consumption. Lastly, the simulation results show how our suggested method outperforms three other state-of-the-art methods in terms of coverage rate, throughput, and verification latency, making proactive coverage management for URLLC-aware UAV networks feasible.
\end{abstract}

\begin{IEEEkeywords}
Importance sampling, Coverage verification, URLLC, LSTM-MDN, Coordinated trajectory planning.
\end{IEEEkeywords}

\section{Introduction}
\label{sec:intro}

\noindent Unmanned aerial vehicle (UAV) networks have become a versatile way to handle ultra-reliable low-latency communication (URLLC) needs, such as emergency  services, industrial IoT, and autonomous vehicles alongside enhanced mobile broadband (eMBB) services for high-throughput data-intensive applications \cite{mozaffari2019tutorial,saad2020vision}. 
While eMBB users can tolerate minor delays, URLLC users require sub-millisecond latency and very high reliability, which means the network must be managed proactively instead of just reacting to the problems. Over short distances, millimeter-wave (mmWave) can provide high data speeds and dependability.
However, due to substantial propagation and penetration losses of mmWave, line-of-sight (LoS) conditions are the main factor in link quality, so maintaining a steady LoS connection is vital for users on the move~\cite{al2014optimal}.
% The majority of current UAV trajectory planning techniques rely on snapshots, meaning they only verify coverage at particular time points~\cite{Jin2024Trajectory}. This assumes that users stay stationary during the decision time interval($\Delta t$) between two consecutive time points~\cite{zeng2017energy,chowdhury2020mobility}. However, in the real world, users are often mobile even during the $\Delta t$ interval. A user who has a LoS signal at time $t$ might have intermediary non-LoS (NLoS) due to the presence of small obstacles like portions of buildings (say, a building edge) while moving to their next time point $t+\Delta t$ . 
Plenty of research focuses on where to put UAVs~\cite{alzenad2017d} and how to plan their paths~\cite{zeng2017energy} to maximize coverage. Recent studies use deep reinforcement learning like multi-agent deep deterministic policy gradient (MADDPG) for coordinating multiple UAVs trajectory planning~\cite{VanillaMADDPG2025, ravi2025optimizing}. 
 While these methods achieve effective trajectory planning, their coverage verification mechanisms have critical limitations. Qin et al. \cite{noma_2025} verify coverage through SINR thresholds and queue-based URLLC constraints, but only at discrete time slots under quasi-static assumptions. Dhinesh Kumar et al. \cite{MADDPG_BO2025} use binary coverage indicators that check if data rates exceed thresholds at specific checkpoints, achieving good snapshot-based verification. However, all these approaches share a fundamental limitation, they only verify coverage at discrete time points and ignore what happens during user movement within decision intervals $\Delta t$. This creates a critical gap where users with LoS at times $t$ and $t+\Delta t$ may still experience intermediate non-LoS (NLoS) due to obstacles during movement. These methods lack predictive mechanisms to anticipate where failures are likely to occur, leading to inefficient allocation of computational resources.
More precisely, a user's LoS connectivity with the UAV at the start and finish of that time interval does not always imply LoS connectivity throughout the duration of that period. (Fig.~\ref{fig:coverage_gap}). This leads to dropped connections, which can be devastating for delay-sensitive URLLC based mission critical applications as well as makes a system less reliable for eMBB users. It can be addressed by considering users’ location at very short intervals by making $\Delta t$ very small. However, in that case, location updates will take up a significant portion of data communication time, thereby reducing the user throughput substantially.
% Adding to the draw
\begin{figure}
    \centering
    \includegraphics[width=0.8\linewidth]{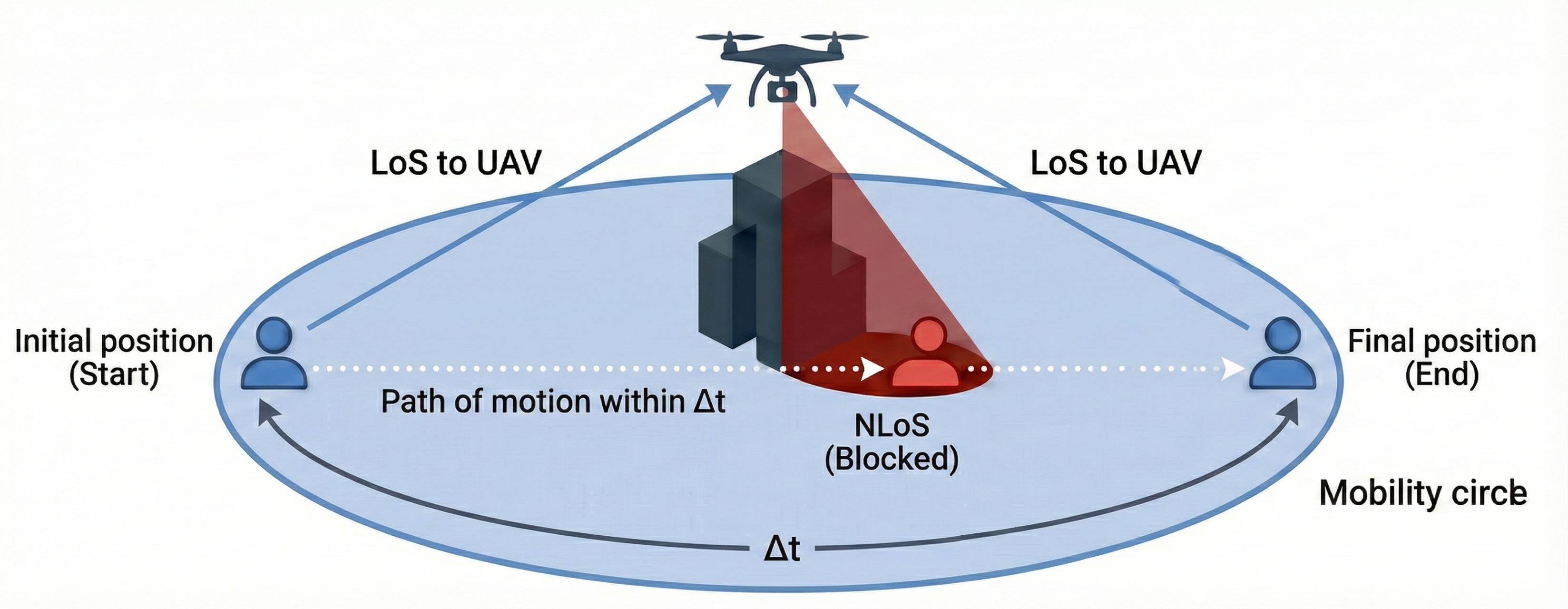}
    \caption{Analysis of coverage gaps during user movement.}
    \label{fig:coverage_gap}
    % \vspace{-0.6cm}
\end{figure}
% \begin{figure}
%     \centering
%     \includegraphics[width=1\linewidth]{coverage_gap (2).pdf}
%     \caption{Analysis of coverage gaps during user movement.}
%     \label{fig:coverage_gap}
% \end{figure}

If no past mobility related information is considered, at a particular time point $t$, a user $u$ positioned at $u_t$ and moving at velocity $v$ can end up anywhere inside the mobility circle of radius $v \cdot \Delta t$ with center as $u_t$. To ensure complete coverage, we should confirm that no point within $C$ falls into NLoS. Checking every point via ray-casting is impossible. Instead, we have to use probabilistic estimation to find the failure probability $P_f$. For high reliability, $P_f$ must be very small. Using standard uniform Monte Carlo estimation, $N = O(1/\sqrt{P_f})$ samples are required~\cite{rubinstein2017simulation} implying approximately $10^6$ samples for $P_f = 10^{-4}$. Assuming that each sample takes about 1$\mu$s for ray-casting, this corresponds to a latency of roughly $1 \; to \;100$ ms per user. This is too slow for a real-world system that requires sub-millisecond latency.

One way to reduce this latency is to consider the past mobility related information of the users. Long short-term memory (LSTM) networks are great for modeling this type of past information sequences~\cite{hochreiter1997long}. However, predicting a single deterministic trajectory introduces significant uncertainty as the users at intersections can take multiple possible paths, and averaging these possibilities often yields physically unrealistic predictions. 
To address this, mixture density networks (MDNs)~\cite{bishop1994mixture} can be integrated with LSTMs to model the inherent multimodality of user movement, predicting multiple possible trajectories with associated probabilities. Moreover, by sampling preferentially in these high-probability directions rather than uniformly throughout the entire mobility region $C$, more efficient coverage verification can be achieved. Such non uniform sampling can efficiently be modeled by using classical {\it importance sampling} (IS).

To the best of our knowledge, no existing research covers (1) efficient coverage checking for continuous movement between timesteps, (2) combining learned path prediction with formal IS theory, or (3) real-time coordination for multiple UAVs with lower variance.
Building on these insights, we develop a predictive importance sampling (PIS) which provably reduces variance in Monte Carlo estimation~\cite{owen2013monte,rubinstein2017simulation} and consequently reduces verification latency by focusing computational resources on the most critical regions. Using this, we develop an efficient multi-UAV trajectory planning mechanism.
More specifically our contributions can be summarized as follows:
\begin{enumerate}
\item We develop a theoretically-grounded PIS framework for efficient coverage verification, and prove that it provides unbiased failure probability estimates with reduced variance versus uniform sampling under specified conditions. Unbiasedness ensures accurate URLLC reliability guarantees, while variance reduction enables real-time verification with significantly fewer samples.

\item We create an LSTM-MDN architecture as the predictive foundation that generates multi-modal trajectory distributions focusing sampling on anticipated failure locations, employing a defensive mixture approach combining prediction with uniform sampling backup for robustness.

\item We develop an adaptive multi-objective reward function with learnable weights that dynamically balances throughput, coverage, fairness, and energy efficiency based on network state.

\item We integrate PIS with MADDPG to design coordinated multi-UAV trajectory planning enabling real-time proactive coverage management, and establish our strategy's superiority over state-of-the-art approaches through comprehensive experiments.
\end{enumerate}

%However, these techniques haven't been adapted for real-time wireless checks. \\

%Mixture Density Networks (MDNs)~\cite{bishop1994mixture} are good at capturing complex data shapes. While these have been used for autonomous systems~\cite{alahi2016social,lee2017desire}, they haven't been used as the foundation for importance sampling in trajectory planning.

% \subsection{Our contributions are listed as follows:}

The rest of this paper is organized as follows: Section \ref{sec:system} describes the system model, Section \ref{prop_str} presents the proposed PIS framework along with its proofs of unbiasedness and variance reduction, LSTM-MDN architecture, and coordinated multi-UAV trajectory planning, Section \ref{simul_res} provides simulation results and comparison with existing related strategies, and Section \ref{conclusion} concludes with possible future directions.

\section{System Model}
\label{sec:system}

\subsection{Network Topology}

\noindent We consider a three-dimensional air-to-ground cellular wireless communication system consisting of $N_{\text{UAV}}$ UAVs serving as aerial base stations and $N_{\text{users}}$ ground users over a geographical area of $a \times a$ m$^2$. 
Here, both UAVs and users are mobile. The environment contains $M$ static obstacles (buildings) that can block LoS paths between UAVs and ground users. Each UAV operates within an altitude range of $h_{min}$ to $h_{max}$ meters with maximum speed $v_{\text{UAV}}$ m/s to avoid collision with buildings as well to ensure {\it safe flight zone} within the designated green zone. We assume ground users are located at zero altitude. Time is discretized into decision intervals of duration $\Delta t$ seconds.

%We consider a three-dimensional air-to-ground cellular wireless communication system consisting of $N_{\text{UAV}} = 5$ UAVs serving as aerial base stations and $N_{\text{users}} = 230$ ground users over a geographical area of $1.5 \times 1.5$ km$^2$. The environment contains $M = 50$ static obstacles (buildings) with heights ranging from 20 to 80 meters that can block Line-of-Sight (LoS) paths between UAVs and ground users. The user population comprises $N_{\text{URLLC}} = 30$ Ultra-Reliable Low-Latency Communication (URLLC) users and $N_{\text{eMBB}} = 200$ enhanced Mobile Broadband (eMBB) users. Each UAV is equipped with a single antenna and operates within an altitude range of 100-500 meters with maximum speed $v_{\text{UAV}} = 25$ m/s. To ensure flight safety, UAVs try to maintain a separation distance of $d_{\text{min}} = 350$ meters. Time is discretized into decision intervals of duration $\Delta t = 300$ second.

% \subsection{Channel Model- \textbf{requires complete overhaul}} 

\subsection{Channel Model}
\noindent We adopt the 73 GHz air-to-ground propagation model with both LoS and NLoS conditions. The path loss (in dB) for the link between UAV $i$ and user $j$ is~\cite{Liu2014Capacity}:
\begin{equation}
    PL_{i,j}(d_{i,j}) = \alpha_{pl} + 10\beta_{pl}\log_{10}(d_{i,j}),
\end{equation}
where $\alpha_{pl}$ and $\beta_{pl}$ are environment-specific constants and $d_{i,j}$ is the Euclidean distance. The received power (in dBm) at user $j$ from UAV $i$ is:
\begin{equation}
    P_{i,j} = P_{i} + G_{i} + G_{j} - PL_{i,j}(d_{i,j}),
\end{equation}
where $P_{i}$ is the transmit power and $G_{i}$, $G_{j}$ are antenna gains at the UAV and user, respectively. The achievable throughput is given by Shannon's capacity~\cite{Liu2014Capacity}:
\begin{equation}
    T_{i,j} = B_{w}\log_{2}\left(1+\frac{P_{i,j} \times |g_{i,j}|^{2}}{\sigma_{0}^{2}}\right),
\end{equation}
where $B_{w}$ is the channel bandwidth, $\sigma_{0}^{2}$ is the additive white Gaussian noise power, and $|g_{i,j}|$ is the small-scale fading coefficient following Rician (LoS) or Rayleigh (NLoS) distributions.

%Note that, here, we assume orthogonal frequency division multiple access (OFDMA) where each UE communicates using orthogonal resource blocks [].

%Due to the high path loss at mmWave frequencies, we consider only LoS links for successful communication. 
% In our system, UAVs and users are mobile entities. Due to the presence of static obstacles, the direct LoS path between a UAV and ground user may be blocked. We consider only LoS links for successful communication at 61 GHz carrier frequency. For a particular UAV $u_i$ at position $\mathbf{p}_a \in \mathbb{R}^3$ and ground user $u_j$ at position $z \in \mathbb{R}^2$, the channel gain is modeled using a simplified path loss model:
% \begin{equation}
% g(d) = \frac{1}{d^{2.5}},
% \end{equation}
% where $d = \sqrt{(x_a - x_j)^2 + (y_a - y_j)^2 + z_a^2}$ is the three-dimensional Euclidean distance in meters. The received Signal-to-Noise Ratio (SNR) at user $u_j$ is:
% \begin{equation}
% \text{SNR}_{i,j} = \frac{P_{\text{tx}} \cdot g(d)}{N_0},
% \end{equation}
% where $P_{\text{tx}} = 250$ mW (24 dBm) is the transmit power and $N_0 = 10^{-13}$ W is the noise power. The achievable data rate follows Shannon capacity:
% \begin{equation}
% C_{i,j} = B \log_2(1 + \text{SNR}_{i,j}),
% \end{equation}
% where $B = 1$ MHz is the channel bandwidth. The communication range of each UAV is $R_{\text{comm}} = 300$ meters. LoS determination between UAV and ground point is performed via computational geometry-based ray-casting, checking whether the line segment intersects any obstacle.

\subsection{User Traffic Characterization}

\noindent According to user delay tolerance requirements, we classify users into two service classes namely URLLC and eMBB. For simplicity we have not considered massive machine-type communication (mMTC) users as we focus only on human type applications. URLLC traffic is designed for mission-critical applications requiring sub-millisecond latency and high reliability. Even minor coverage interruptions during user movement can cause catastrophic failures. URLLC users require continuous LoS coverage verification throughout their mobility region during decision interval $\Delta t$. A user is considered covered if the estimated failure probability $\hat{P}_f$ below a pre-defined threshold.
eMBB traffic has been designed for high-throughput, data-intensive applications that can tolerate minor delays. eMBB users do not require  extensive LoS coverage verification.

\subsection{Mobility Model} 
\noindent User velocities are bounded by $v_u \in [0, v_{max}]$ m/s, representing pedestrian and vehicular mobility. Users remain within the simulation area through boundary reflection. At a particular time point $ t \in [0,\Delta t]$, a user at position $u_t$ with velocity $v_u$ can occupy any position within the mobility circle:
\begin{equation}
C = \{z \in \mathbb{R}^2 : \|z - u_t\| \leq r\}.
\end{equation}
where $r = v_u \cdot \Delta t$.
UAVs execute continuous control actions $\Delta p = (\Delta x, \Delta y, \Delta z)$ generated by actor networks, constrained by the velocity $v_{uav} \in [0, V_{\text{max}}]$ m/s. Position updates are:
\begin{equation}
p_{t+1} = p_t + v_{uav} \cdot \Delta t,
\end{equation}
with hard boundary constraints ensuring UAVs remain within the safe flight zone.

\section{Proposed Strategy}
\label{prop_str}
% \noindent This section presents our predictive importance sampling framework for efficient user coverage verification in UAV networks. We begin by formulating the rare event challenge that makes standard uniform Monte Carlo estimation computationally intractable for ultra-low failure probabilities (\ref{3.1}). We then introduce importance sampling principles and derive the optimal proposal distribution (\ref{3.2}), followed by our practical PIS model that combines LSTM-based trajectory prediction with defensive mixture sampling (MDN) (\ref{3.3}). We prove the unbiasedness and variance reduction properties of our estimator (\ref{3.4}), describe the LSTM-MDN architecture for multimodal trajectory prediction (\ref{3.5}). Finally we present the MADDPG-based coordinated multi-UAV planning with adaptive multi-objective rewards (\ref{3.6}). 

\noindent This section presents our predictive importance sampling framework for efficient user coverage verification in UAV networks. We formulate the rare event challenge (\ref{3.1}), introduce importance sampling principles and derive the optimal proposal distribution (\ref{3.2}), then present our practical PIS model combining LSTM-based trajectory prediction with defensive mixture sampling using MDN (\ref{3.3}). We prove the unbiasedness and variance reduction properties of our estimator (\ref{3.4}), describe the LSTM-MDN architecture for multimodal trajectory prediction (\ref{3.5}), and present MADDPG-based coordinated multi-UAV planning with adaptive multi-objective rewards (\ref{3.6}).

\subsection{Rare Event Challenge}
\label{3.1}
\noindent As mentioned earlier unlike eMBB, URLLC requirements mandate continuous coverage throughout $C$. Formally, we must verify $L(z) = 1$ for all $z \in C$, where $L(z)$ denotes the LoS status of the point $z$. Since verifying uncountably infinite points is intractable, we estimate failure probability. Assuming worst-case uniform uncertainty over $C$, the failure probability is:
\begin{equation}
P_f = \frac{\int_C I_F(z) \, dz}{\int_C dz} = \mathbb{E}_p[I_F(Z)],
\end{equation}
where $Z \sim \text{Uniform}(C)$,  $p(z) = 1/A$ with $A$ being the area of the mobility circle $C$ and $I_F(z)=1(0)$ is the indicator variable showing the NLoS (LoS) at position $z$. Standard Monte Carlo estimator is
\begin{equation}
\hat{P}_f^{\text{uniform}} = \frac{1}{N} \sum_{i=1}^{N} I_F(Z_i), \quad Z_i \sim \text{Uniform}(C).
\end{equation}
Now we come across the \textit{rare event challenge} where probability of NLoS is very rare which may require extraordinary effort to locate the NLoS. To address this rare event challenge we adopt the importance sampling principle.
% If variance is $\text{Var}(\hat{P}_f) \approx P_f/N$ then relative error (RE) is
% \begin{equation}
% \label{rel_err}
% \text{RE} = \frac{\sqrt{\text{Var}(\hat{P}_f)}}{P_f} \approx \frac{1}{\sqrt{N \cdot P_f}}.
% \end{equation}

% This implies that we need $N = O(1/\sqrt{P_f})$ samples to get accurate results~\cite{rubinstein2017simulation}. For $P_f = 10^{-4}$, it corresponds to $10^6$ ($10^8$) samples to attain a 10\% (1\%) relative error. At an estimated cost of 1$\mu$s per ray-casting sample, this results in a latency of about 1 ms (100 ms). 
% Such latency is impractical for a real world system that requires a sub-millisecond latency.
% This implies that for $P_f = 10^{-4}$ and target RE = 10\%, we need $N \geq 10^6$ samples, which is computationally infeasible for real-time control.
%To address this rare event challenge we adopt the importance sampling.

% \section{Predictive Importance Sampling Framework}

\subsection{Importance Sampling (IS) Principles}
\label{3.2}
\noindent IS is a variance reduction technique that changes the sampling distribution while maintaining unbiasedness through importance weights~\cite{owen2013monte}. The fundamental IS identity states that for any expectation $\mu = \mathbb{E}_p[f(X)]$ under distribution $p(x)$:
\begin{equation}
\mu = \int f(x) p(x) \, dx = \mathbb{E}_q\left[f(X) \frac{p(X)}{q(X)}\right]
\end{equation}
where $q(x)$ is any proposal distribution with support containing that of $p(x)$. The term $w(x) = p(x)/q(x)$ is the importance weight. The IS estimator is \cite{rubinstein2017simulation}:
\begin{equation}
\hat{\mu}_q = \frac{1}{N} \sum_{i=1}^{N} f(X_i) w(X_i), \quad X_i \stackrel{\text{i.i.d.}}{\sim} q(x).
\end{equation}

This estimator is \textit{unbiased} meaning its expected value is equal to the true mean $\mu$, provided the support condition $q(x) > 0$ wherever $p(x) > 0$ is met~\cite{rubinstein2017simulation}. The proof is a direct consequence of the IS identity:
\begin{equation}
    \mathbb{E}_q[\hat{\mu}_q] =  \mathbb{E}_q\left[ f(X) w(X)\right] = \mu.
\end{equation}

Its variance is:
\begin{equation}
\label{variance}
\text{Var}_q(\hat{\mu}_q) = \frac{1}{N}\left[\mathbb{E}_q\left[(f(X)w(X))^2\right] - \mu^2\right].
\end{equation}
% The above identity is based on the condition $f(x) \geq 0$ for all $x$.
If $f(x) \geq 0$ for all $x$, the \textit{optimal proposal distribution} that minimizes variance is:

\begin{equation}
\label{optimal_q}
q^*(x) = \frac{f(x)p(x)}{\int f(y)p(y) \, dy}  = \frac{f(x)p(x)}{\mu}\mbox{~\cite{rubinstein2017simulation}}.
\end{equation}

Based on IS principles, we now propose an intelligent predictive importance sampling (PIS) model for coverage verification.

\subsection{Proposed PIS model for Coverage Verification}
\label{3.3}
\noindent For our coverage verification problem, we have  $f(x) = I_F(x)$. Since $I_F(x) \in \{0, 1\}$, the condition $f(x) \geq 0$ for the optimal proposal distribution in \eqref{optimal_q} is satisfied. If $f(x) = I_F(x)$, $f(x)w(x) = f(x)\frac{p(x)}{q*(x)} = \mu$ implying that variance as stated in \eqref{variance} becomes zero. However, as we can see from \eqref{optimal_q}, constructing $q^*$ requires knowing $\mu$ itself, the very quantity we seek to estimate. Therefore, practical IS designs $q(x)$ to approximate $q^*(x)$. In our coverage verification problem, $f = I_F$ and $p$ is uniform implying $p(x) = 1/A$. Since $\mu$ is constant, from \eqref{variance}, we get $q^*(z) \propto I_F(z)$. This implies that  the optimal proposal concentrates samples on NLoS regions.
Note that our objective is to design the learned $q(z)$ to approximate $q^*(z)$. A critical failure will occur when $q(z) \approx 0$ where $p(z) > 0$. This causes importance weights $w(z) = p(z)/q(z) \rightarrow \infty$, implying a catastrophic increase in variance. This is particularly dangerous with learned proposals that may mis-predict failure (NLoS) regions.
To overcome this, we employ a defensive mixture strategy:
% \vspace{-0.05cm}
\begin{equation}
q_{\text{pis}}(z) = \alpha \cdot q_{\text{pred}}(z) + (1-\alpha) \cdot p(z), \quad \alpha \in [0,1)
\end{equation}
where $q_{\text{pred}}(z)$ is the LSTM-MDN prediction concentrating on predicted failure regions and $p(z)$ is the uniform distribution ensuring {\it fallback coverage}. 
Since $p(z) = 1/A > 0$ everywhere in $C$, we have $q_{\text{pis}}(z) \geq (1-\alpha)/A > 0$ for all $z \in C$, guaranteeing bounded importance weights $w(z) \leq A/(1-\alpha)$.
Here the parameter $\alpha$ provides tunable control: higher the $\alpha$ (tends to 1) more is the reliance on prediction whereas low $\alpha$ (tends to 0) prioritizes robust uniform coverage. So, $\alpha$ must be chosen empirically to balance both the objectives.

\subsection{Proof of Unbiasness and Variance Reduction of PIS}
\label{3.4}
\begin{theorem}
\label{thm:unbiased}
The PIS estimator 
\begin{equation}
\hat{P}_{\text{PIS}} = \frac{1}{N}\sum_{i=1}^{N} I_F(Z_i) \frac{p(Z_i)}{q_{\text{pis}}(Z_i)}, \quad Z_i \stackrel{\text{i.i.d.}}\sim q_{\text{pis}}
\end{equation}
is unbiased: $\mathbb{E}_{q_{\text{pis}}}[\hat{P}_{\text{PIS}}] = P_f$.
\end{theorem}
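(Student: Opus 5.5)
The argument is a direct application of the IS identity from Section~\ref{3.2}, specialized to $f = I_F$, $p = 1/A$ on $C$, and $q = q_{\text{pis}}$. The only real work is to check the support condition and to confirm that the relevant integrals are finite.

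\textbf{Support condition.} First I will establish that $q_{\text{pis}}$ is a valid density on $C$ whose support contains that of $p$.
\begin{itemize}
\item Assuming $q_{\text{pred}}$ is a probability density on $C$ (or is restricted and renormalized to $C$), the mixture $q_{\text{pis}} = \alpha q_{\text{pred}} + (1-\alpha)p$ is nonnegative.
\item It integrates to $\alpha + (1-\alpha) = 1$.
\item As noted in Section~\ref{3.3}, it satisfies $q_{\text{pis}}(z) \geq (1-\alpha)/A > 0$ on $C$ because $\alpha < 1$.
\end{itemize}
Hence $q_{\text{pis}}(z) > 0$ wherever $p(z) > 0$, so the weight $w(z) = p(z)/q_{\text{pis}}(z)$ is well defined. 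It is also bounded: $w(z) \le 1/(1-\alpha)$. (With $p = 1/A$ this is sharper than the $A/(1-\alpha)$ stated in Section~\ref{3.3}.)

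\textbf{Linearity.} Next, by linearity of expectation and the i.i.d.\ assumption,
\[
\mathbb{E}_{q_{\text{pis}}}[\hat{P}_{\text{PIS}}] = \frac{1}{N}\sum_{i=1}^N \mathbb{E}_{q_{\text{pis}}}\!\left[I_F(Z_i)\,w(Z_i)\right] = \mathbb{E}_{q_{\text{pis}}}\!\left[I_F(Z)\,w(Z)\right],
\]
so it suffices to treat a single sample.

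\textbf{Single-sample computation.} I will compute the expectation directly:
\[
\int_C I_F(z)\frac{p(z)}{q_{\text{pis}}(z)}\,q_{\text{pis}}(z)\,dz = \int_C I_F(z)\,p(z)\,dz = \mathbb{E}_p[I_F(Z)] = P_f.
\]
The cancellation $q_{\text{pis}}(z)/q_{\text{pis}}(z) = 1$ is legitimate at every point of $C$ because $q_{\text{pis}} > 0$ there. Outside $C$, both $p$ and $I_F\,p$ vanish, so nothing is lost by integrating over $C$ only.

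\textbf{Main obstacle.} The step needing the most care is justifying that the integrals exist and that the cancellation is valid. Boundedness handles this:
\begin{itemize}
\item $0 \le I_F\,w \le 1/(1-\alpha)$, so the integrand is bounded and integrable.
\item $I_F$ must be measurable. I will take this as given, since the NLoS region is determined by finitely many obstacles and is therefore a finite union of regions with nice boundaries.
\end{itemize}
I will also remark that the argument never uses the accuracy of $q_{\text{pred}}$. Unbiasedness holds for any LSTM-MDN output, and the defensive component alone secures it. That component matters again later for the variance bound.
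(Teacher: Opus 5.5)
Your proposal is correct and follows essentially the same route as the paper. Linearity and the i.i.d.\ assumption reduce to a single sample, and the expectation becomes an integral over $C$ in which $q_{\text{pis}}$ cancels because the defensive component keeps it strictly positive; your extra checks are valid refinements rather than a different argument, in particular $w(z)\le 1/(1-\alpha)$ is the correct bound (the paper's $A/(1-\alpha)$ actually bounds $1/q_{\text{pis}}$, not $w$), and your assumption that $q_{\text{pred}}$ lives on $C$ is not needed, since $p$ and hence $w$ vanish off $C$ for the Gaussian-mixture $q_{\text{pred}}$ on $\mathbb{R}^2$.
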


\begin{proof}
By linearity of expectation and i.i.d. samples:
\begin{align}
\mathbb{E}_{q_{\text{pis}}}[\hat{P}_{\text{PIS}}] &= \mathbb{E}_{q_{\text{pis}}}\left[I_F(Z) \frac{p(Z)}{q_{\text{pis}}(Z)}\right] \\
&= \int_C I_F(z) \frac{p(z)}{q_{\text{pis}}(z)} q_{\text{pis}}(z) \, dz \\
&= \int_C I_F(z) p(z) \, dz = \mathbb{E}_p[I_F(Z)] = P_f
\end{align}
where $q_{\text{pis}}(z)$ terms cancel since $q_{\text{pis}}(z) > 0 \; \forall z$.
\end{proof}

\begin{theorem}
\label{thm:variance}
Let $\mathcal{F} = \{z \in C : I_F(z) = 1\}$ be the failure region. If $q_{\text{pis}}(z) \geq p(z)$ for all $z \in \mathcal{F}$, then:
\begin{equation}
\text{Var}_{q_{\text{pis}}}(\hat{P}_{\text{PIS}}) \leq \text{Var}_{p}(\hat{P}_{\text{uniform}})
\end{equation}
\end{theorem}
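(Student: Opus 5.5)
The plan is to compare the two variances through the variance formula \eqref{variance}, applied with the same sample size $N$. In both cases the mean is the same, $\mu = P_f$. For the uniform estimator this is immediate. For the PIS estimator it follows from Theorem~\ref{thm:unbiased}. The $-\mu^2$ terms therefore cancel, and the claim reduces to comparing the two second moments.

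For the uniform estimator the weight is $w \equiv 1$. Since $I_F^2 = I_F$, its second moment is
\begin{equation*}
\mathbb{E}_p[I_F(Z)^2] = \int_C I_F(z)\,p(z)\,dz = P_f .
\end{equation*}
For the PIS estimator, I will write its second moment as an integral against $q_{\text{pis}}$, cancel one factor of $q_{\text{pis}}$, and restrict to the failure region $\mathcal{F}$, where $I_F = 1$:
\begin{equation*}
\mathbb{E}_{q_{\text{pis}}}\!\left[\left(I_F(Z)\frac{p(Z)}{q_{\text{pis}}(Z)}\right)^{2}\right]
= \int_{\mathcal{F}} \frac{p(z)^2}{q_{\text{pis}}(z)}\,dz
= \int_{\mathcal{F}} \frac{p(z)}{q_{\text{pis}}(z)}\,p(z)\,dz .
\end{equation*}
The cancellation is legitimate because $q_{\text{pis}} \ge (1-\alpha)/A > 0$ on $C$, as shown in Section~\ref{3.3}. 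This also guarantees that the integral is finite.

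The key step uses the hypothesis $q_{\text{pis}}(z) \ge p(z)$ on $\mathcal{F}$. It gives the pointwise bound $p(z)/q_{\text{pis}}(z) \le 1$ there. Substituting this bound, the PIS second moment is at most
\begin{equation*}
\int_{\mathcal{F}} p(z)\,dz = P_f .
\end{equation*}
Subtracting $P_f^2$ from both second moments and dividing by $N$ then yields $\text{Var}_{q_{\text{pis}}}(\hat{P}_{\text{PIS}}) \le \text{Var}_{p}(\hat{P}_{\text{uniform}})$. I could also note when the inequality is strict: exactly when $q_{\text{pis}} > p$ on a subset of $\mathcal{F}$ with positive measure.

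There is no real obstacle in this argument. The delicate points are bookkeeping. First, the comparison must be made at equal $N$. Second, the hypothesis is needed only on $\mathcal{F}$, because the integrand of the second moment vanishes outside $\mathcal{F}$; this is why one cannot require $q_{\text{pis}} \ge p$ everywhere, since both densities integrate to one. Third, the support condition is needed so that the IS identity behind Theorem~\ref{thm:unbiased} holds and $\mu$ is indeed $P_f$ for both estimators.
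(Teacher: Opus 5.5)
Your proposal is correct and follows essentially the same route as the paper: at equal $N$ the $P_f^2$ terms cancel, the PIS second moment is rewritten as $\mathbb{E}_p[I_F(Z)\,p(Z)/q_{\text{pis}}(Z)]$, and the pointwise bound $p/q_{\text{pis}}\le 1$ on $\mathcal{F}$ shows it is at most $P_f$. Your extra remarks are correct refinements that the paper does not spell out: the positivity bound $q_{\text{pis}}\ge(1-\alpha)/A>0$, the criterion for strict inequality, and why the hypothesis can only be imposed on $\mathcal{F}$ rather than on all of $C$.
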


\begin{proof}[Proof:]
The variance of the uniform estimator is:
\begin{align}
\label{uniform_var}
\text{Var}_p(\hat{P}_{\text{uniform}}) &= \frac{1}{N}[\mathbb{E}_p[I_F(Z)^2] - P_f^2] \nonumber \\
&= \frac{1}{N}[P_f - P_f^2] = \frac{P_f(1-P_f)}{N},
\end{align}
since $I_F(Z)^2 = I_F(Z)$ for binary indicators.\\ The variance of the PIS estimator is:
\begin{equation}
\label{pis_var}
\text{Var}_{q_{\text{pis}}}(\hat{P}_{\text{PIS}}) = \frac{1}{N}\left[\mathbb{E}_{q_{\text{pis}}}[(I_F(Z)w(Z))^2] - P_f^2\right].
\end{equation}

For variance reduction, we require $\text{Var}_{q_{\text{pis}}} \leq \text{Var}_p(\hat{P}_{\text{uniform}})$.
Therefore, after equating \eqref{uniform_var} \& \eqref{pis_var} we get:
\begin{equation}
\mathbb{E}_{q_{\text{pis}}}[(I_F(Z)w(Z))^2] \leq P_f.
\end{equation}

Converting to expectation over $p$:
\begin{align}
\mathbb{E}_{q_{\text{pis}}}[(I_F w)^2] &= \int_C I_F(z)^2 w(z)^2 q_{\text{pis}}(z) \, dz \nonumber \\
&= \int_C I_F(z) \frac{p(z)^2}{q_{\text{pis}}(z)} \, dz \nonumber \\
&= \mathbb{E}_p\left[I_F(Z) \frac{p(Z)}{q_{\text{pis}}(Z)}\right].
\end{align}

% So, the fundamental condition for variance reduction is: 
% \begin{equation}
% \label{var_red}
%     \mathbb{E}_p\left[\frac{I_F(Z) p(Z)}{q_{\text{pis}}(Z)}\right] \leq \mathbb{E}_p[I_F(Z)] = P_f.
% \end{equation}

% The above inequality holds if $\frac{p(Z)}{q_{\text{pis}}(Z)} \leq 1 \implies q_{\text{pis}}(z) \geq p(z)$ for all $z \in \mathcal{F}$. If $q_{pis}(z) \geq p(z)$ then the ratio $\frac{p(z)}{q_{pis}(z)} \leq 1$ for all $z \in F$. Since $I_F(z) \in \{0,1\}$, we have $I_F(z) \frac{p(z)}{q_{pis}(z)} \leq I_F(z)$ for all $z \in C$. Taking the expectation with respect to $p(z)$ on both sides preserves the inequality:
% \[
% E_p \left[ I_F(Z) \frac{p(Z)}{q_{pis}(Z)} \right] \leq E_p [I_F(Z)] = P_f
% \]
% So, 
% completing the proof.
% \end{proof}
% \needspace{4\baselineskip} % Checks if there are at least 4 lines of space left
So, the variance reduction holds if and only if: 
\begin{equation}
\label{var_red}
    \mathbb{E}_p\left[\frac{I_F(Z) p(Z)}{q_{\text{pis}}(Z)}\right] \leq \mathbb{E}_p[I_F(Z)] = P_f.
\end{equation}

% \[
% \implies \frac{I_F(Z) p(Z)}{q_{\text{pis}}(Z)} \leq I_F(Z)
% \]

% \[
% \implies \frac{ p(Z)}{q_{\text{pis}}(Z)} \leq 1
% \]
% We are able to cancel out $I_F(Z)$ since it is $>= 0$.
% \[
% \implies  p(Z) \leq q_{\text{pis}}(Z)
% \]
% \end{proof}

We now show that the pointwise condition $q_{\text{pis}}(z) \geq p(z)$ for all $z \in \mathcal{F}$ is sufficient to guarantee \eqref{var_red}.

Now, if $q_{\text{pis}}(z) \geq p(z)$ holds for all $z \in \mathcal{F}$, then we get $\frac{p(z)}{q_{\text{pis}}(z)} \leq 1$. Since $I_F(z) = 0 \;\text{for all } z \notin \mathcal{F}$, we have:
\begin{equation}
I_F(z) \frac{p(z)}{q_{\text{pis}}(z)} \leq I_F(z) \quad \text{for all } z \in C
\end{equation}

Taking expectations with respect to $p(z)$:
\begin{equation}
\mathbb{E}_p\left[I_F(Z) \frac{p(Z)}{q_{\text{pis}}(Z)}\right] \leq \mathbb{E}_p[I_F(Z)] = P_f
\end{equation}

This establishes \eqref{var_red}, completing the proof.
\end{proof}

Theorems \ref{thm:unbiased} and \ref{thm:variance} provide a clear design principle that the predictive model $q_{\text{pred}}$ must concentrate sufficient probability mass on failure regions such that $q_{\text{pis}}(z) \geq p(z) \;\text{for all } z \in C$ holds. If the model $q_{\text{pred}}$ accurately identifies NLoS areas, this condition is satisfied, yielding variance reduction. If it cannot do so, the variance may exceed uniform sampling, however, it remains \textit{bounded} and controllable via $\alpha$.

\subsection{Trajectory Prediction using LSTM-MDN Architecture}
\label{3.5}
\noindent To determine the optimal proposal distribution $q^* \propto I_F(z)$, we must accurately predict user locations and NLoS regions by modeling their historical movement. User movement exhibits strong trends. At time $t$, we observe $H$ sequences of historical velocity data $\{(\mathbf{v}_{x,t-H}, \mathbf{v}_{y,t-H}), \ldots, (\mathbf{v}_{x,t-1}, \mathbf{v}_{y,t-1})\}$ and feed them into a 2-layer LSTM with hidden dimension 128, capturing movement patterns in the final hidden state $h_t$.
Standard networks predict a single trajectory, but real-world user movement is multimodal with multiple possible directions. For example, a user at a street corner can turn left, right, or continue straight, averaging these yields physically unrealistic predictions (e.g., walking into a building). MDNs~\cite{bishop1994mixture} solve this by outputting Gaussian mixture model (GMM) parameters instead of a single prediction. The MDN computes $K$ possible paths using LSTM state $h_t$: mixture weights $\pi_k(h_t)$ (summing to 1), mean positions $\boldsymbol{\mu}_k(h_t) \in \mathbb{R}^2$, and covariances $\boldsymbol{\Sigma}_k(h_t) \in \mathbb{R}^{2 \times 2}$. The predictive distribution is:
\begin{equation}
q_{\text{pred}}(z | h_t) = \sum_{k=1}^{K} \pi_k(h_t) \mathcal{N}(z | \boldsymbol{\mu}_k(h_t), \boldsymbol{\Sigma}_k(h_t)).
\end{equation}
The model is trained using negative log-likelihood (NLL) loss. The architecture comprises an input layer (velocity sequences of size $H \times 2$), a 2-layer LSTM, and an MDN output head with softmax functions on weights $\pi_k$, linear functions on means $\boldsymbol{\mu}_k$, and exponential functions ensuring positive diagonal covariance.

We now present \textbf{Algorithm~\ref{alg:pis1}} which computes PIS failure probability. It begins by processing the user's velocity history $\mathcal{H}_v$ via the LSTM-MDN to generate trajectory predictions. A defensive mixture proposal $q_{pis}(z)$ is constructed to ensure sampling robustness. The core loop executes IS: it draws samples from $q_{pis}$, verifies LoS, and computes importance weights. The aggregated failure probability $\hat{P}_f$ is then thresholded against $\epsilon$ to output a binary coverage quality metric $Q_u$. This $Q_u$ serves as the statistically verified ground truth for the reward function in the next stage. Complexity of Algorithm~\ref{alg:pis1} is $O(N \cdot C_{los}) \approx O(N)$, where $N$ is the sample count and $C_{los}$ is the cost of LoS check (typically $O(1)$ with height maps).

\begin{algorithm}[h!]
\caption{PIS Failure Probability Estimation}
\label{alg:pis1}
\begin{algorithmic}[1]
\REQUIRE User state $u_t$, Velocity history $\mathcal{H}_v$, Sample count $N$, mixture weight $\alpha$.
% UAV candidate pos $\mathbf{p}_{next}$, Map $\mathcal{M}$, 
\ENSURE Binary coverage Quality $Q_u$.
\STATE \textbf{Predict:} Get parameters $\{\pi_k, \mu_k, \Sigma_k\} \leftarrow \text{LSTM-MDN}(\mathcal{H}_v)$
\STATE \textbf{Construct Proposal:} $q_{pis}(z) = \alpha \sum \pi_k \mathcal{N}(z|\mu_k, \Sigma_k) + (1-\alpha)U(\mathcal{C})$ \COMMENT {Where $U(\mathcal{C})$ is Uniform PDF: $1/\text{Area}(C)$}
\STATE Initialize weighted failure sum $W_{fail} \leftarrow 0$
\FOR{$i = 1$ to $N$}
    \STATE Sample potential user location $z_i \sim q_{pis}(z)$
    \STATE Reject if $\|z_i - u_t\| > \text{Radius(C)}$ (ensure $z_i \in C$)
    \STATE Check Line-of-Sight: $I_F(z_i)$
    
    % \leftarrow \text{RayCast}(\mathbf{p}_{next}, z_i, \mathcal{M})$
    \IF{$I_F(z_i) == 1$}
        \STATE Compute Importance Weight: $w_i \leftarrow \frac{1/\text{Area}(\mathcal{C})}{q_{pis}(z_i)}$
        \STATE $W_{fail} \leftarrow W_{fail} + w_i$
    \ENDIF
\ENDFOR
\STATE Estimate Failure Prob: $\hat{P}_f \leftarrow W_{fail} / N$
\STATE Determine Quality: \textbf{if} $\hat{P}_f < \epsilon$ \textbf{then} $Q_u \leftarrow 1$(Success) \textbf{else} $Q_u \leftarrow 0$(Failure)
\RETURN $Q_u$
\end{algorithmic}
\end{algorithm}

Until now we assumed single UAV for user trajectory prediction through coverage failure estimation. In order to avoid collisions and a significant overlap of served users, we suggest using multiple UAVs with coordinated operations. In the next subsection, we use multiple UAVs as agents and employ MADDPG~\cite{VanillaMADDPG2025} to coordinate the operations.
\vspace{-0.2cm}
\subsection{Coordinated Multi-UAV Trajectory Planning}
% \vspace{-0.2cm}
\label{3.6}
\noindent MADDPG uses centralized training with decentralized execution.
We have designed a multi-objective reward function for proactive coverage maintenance:
% \begin{equation}
% R_t = R_{\text{cov}} + R_{\text{throughput}} + R_{\text{density}} + R_{\text{balance}}  - P_{\text{energy}} - P_{\text{collision}}
% \end{equation}
\begin{equation}
R_t = \sum_{j\in R} \omega_{j} \cdot \Phi_j(R_j) + \sum_{k\in P} \omega_{k} \cdot \Phi_k(P_k),
\end{equation}

where $\Phi(\cdot)$ is  a normalization function representing the rewards $R=\{throughput, \;coverage, \;load\;balance\}$ and penalties $P=\{energy, \;collision\}$. We have normalized each objective against its theoretical capacity or statistical maximum, transforming all components into dimensionless utility ratios $\in [0,1]$~\cite{yan2010survey}.
We now define all the rewards (R) and penalties (P) where user is denoted by $u$ and UAV is denoted by $i$.

Normalized \textit{throughput reward}:
\begin{equation}
\label{eq:thr}
\Phi_{\text{thr}}(R_{\text{thr}}) = \frac{\sum_{(i,u)} B \log_2(1 + \text{SNR}_{i,u})}{N_{\text{users}} \cdot B \cdot \log_2(1 + \text{SNR}_{\max})},
\end{equation}
where $\text{SNR}_{\max} = P_{tx} \cdot g(h_{\min})/N_0$ is the theoretical maximum at minimum altitude.

Normalized \textit{coverage reward}:
\begin{equation}
    \Phi_{\text{cov}}(R_{\text{cov}}) = \frac{\sum_u w_u \cdot Q_u}{\sum_u w_u},
\end{equation}
where $Q_u$ is a binary coverage quality variable indicating accurate failure probability estimation. Computation of $Q_u$ is shown in Algorithm~\ref{alg:pis1}.

% This converts coverage into a weighted percentage independent of user count.

Normalized \textit{load balance} reward:
\begin{equation}
\Phi_{\text{bal}}(R_{\text{bal}}) = \frac{\bar{n} - \sigma_n}{\bar{n} + \epsilon},
\end{equation}
where $\bar{n}$ is mean user count per UAV and $\sigma_n$ is standard deviation. 

Normalized \textit{energy consumption} penalty:
\begin{equation}
\Phi_{\text{energy}}(P_{\text{energy}}) = 1 - \frac{\sum_i \|\Delta \mathbf{p}_i\|}{N_{\text{UAV}} \cdot v_{\max} \cdot \Delta t},
\end{equation}

Normalized \textit{collision} penalty:
\begin{equation}
\label{eq:coll}
\Phi_{\text{coll}}(P_{\text{coll}}) = \frac{1}{N_{\text{pairs}}} \sum_{i<j} \exp\left(-\frac{|d_{ij} - d_{shift}|}{k_{\text{scale}}}\right),
\end{equation}
where $N_{pairs}$ is the number of UAV pairs, $d_{ij}$ is the distance between the UAVs $i$ and $j$ and $d_{shift}$ and $k_{scale}$ denote the appropriate shifting and scaling parameters.

Rather than manually tuning the weights $\omega_i$, we employ a learnable weight network $\pi_\omega: \mathcal{S} \to \Delta^4$, where $\Delta^4$ is 5-dimensional probability simplex that dynamically adjusts priorities of the reward components based on global state. 

\textbf{Algorithm \ref{alg:repositioning}} discusses the integration of above computed reward function and uses centralized training decentralized execution (CTDE) for coordinated trajectory planning. At the start of an episode, the weight network $\pi_\omega$ analyzes the global state to adaptively set objective priorities $\omega$ (Line 3). During execution, Line 8 calls Algorithm~\ref{alg:pis1} to compute $Q_u$ for every user. This value feeds directly into the normalized coverage reward $\Phi_{cov}$ (Line 11), which is aggregated with throughput, load balance, and safety penalties into the global reward $R_t$. Coordination is enforced during the training phase (Line 15), where the centralized critic $Q_i$ updates based on the \textit{joint} actions of all UAVs, effectively learning to penalize non-cooperative behaviors like collisions or redundant coverage. 
The computational complexity of Algorithm \ref{alg:repositioning} is analyzed below. Let $J = |\mathcal{I}|$ denote the number of UAV agents, $K_{\text{URLLC}}$ denote the number of URLLC users requiring PIS verification, and $N$ denote the number of importance samples per user in Algorithm \ref{alg:pis1}. The actor $\mu$ and critic $Q$ networks have $L_{\mu}$ and $L_Q$ layers respectively, where the $l$-th layer of the actor network has $U_l$ neurons and the $m$-th layer of the critic network has $V_m$ neurons. The weight network $\pi_\omega$ has $|\pi_\omega|$ parameters. Therefore, the training complexity is $\mathcal{O}(M \cdot T \cdot (J^2|\mathcal{B}|\sum_{l=1}^{L_{\mu}} U_{l-1}U_l + J|\mathcal{B}|\sum_{m=1}^{L_Q} V_{m-1}V_m + J \cdot K_{\text{URLLC}} \cdot N))$, where $|\mathcal{B}|$ is the mini-batch size. The deployment complexity is $\mathcal{O}(T \cdot (J\sum_{l=1}^{L_{\mu}} U_{l-1}U_l + J \cdot K_{\text{URLLC}} \cdot N)) = \mathcal{O}(T(J|s||a| + J \cdot K_{\text{URLLC}} \cdot N))$, where $|s|$ and $|a|$ are the dimensions of state and action, respectively.

\begin{algorithm}[h!]
\caption{PIS-Guided Multi-UAV Trajectory planning}
\label{alg:repositioning}
\begin{algorithmic}[1]
\REQUIRE Agents $\mathcal{I}$, Env $\mathcal{E}$, WeightNet $\pi_\omega$, Actor $\mu$, Critic $Q$.
\FOR{episode $ep = 1$ to $M$}
    \STATE Observe joint state $\mathbf{s} = \{s_1,..., s_N\}$
    \STATE Get adaptive weights: $\mathbf{\omega} \leftarrow \pi_\omega(\mathbf{s}_{global})$
    \FOR{step $t = 1$ to $T$}
        \STATE Select actions $\mathbf{a}_t = \{\mu_i(s_i) + \mathcal{N}_t\}_{i \in \mathcal{I}}$
        \STATE Execute actions: $\mathbf{s}' \leftarrow \text{Step}(\mathbf{a}_t)$
        \FOR{each UAV $i \in \mathcal{I}$}
            \STATE \textbf{Call Alg. 1} for assigned users to get $Q_u$
        \ENDFOR
        \STATE Calculate Normalized Components [Eq. ~\eqref{eq:thr}-~\eqref{eq:coll}]:
        \STATE \quad $\Phi_{thr}$, $\Phi_{cov}$ (via $Q_u$), $\Phi_{bal}$,$\Phi_{eng}$, $\Phi_{col}$.
        \STATE Compute Global Reward: $R_t \leftarrow \sum \omega_j \Phi_j + \sum \omega_k \Phi_k$
        \STATE Store transition $\langle \mathbf{s}, \mathbf{a}, R_t, \mathbf{s}' \rangle$ in Replay Buffer $\mathcal{D}$
        % \STATE \textbf{Multi-Agent Training (Centralized Critic):}
        \STATE Sample batch $\mathcal{B}$ from $\mathcal{D}$
        % \STATE Update Critics $Q_i$ minimizing $L(Q_i, R_t + \gamma Q'_i)$
        \STATE \textit{Coordination:} Update Critic $Q_i$ using Joint Actions $\mathbf{a}_{1..M}$ and State $\mathbf{S}_{global}$
        \STATE Update Actors $\mu_i$ via deterministic policy gradient
        \STATE Update WeightNet $\pi_\omega$ to max long-term utility
        \STATE Update states: $\mathbf{s} \leftarrow \mathbf{s}'$
    \ENDFOR
\ENDFOR
\end{algorithmic}
\end{algorithm}

\begin{figure*}[htb]
    \centering
    % First Subfigure
    
    % Second Subfigure
    \subfloat[]{
        \includegraphics[width=0.32\textwidth]{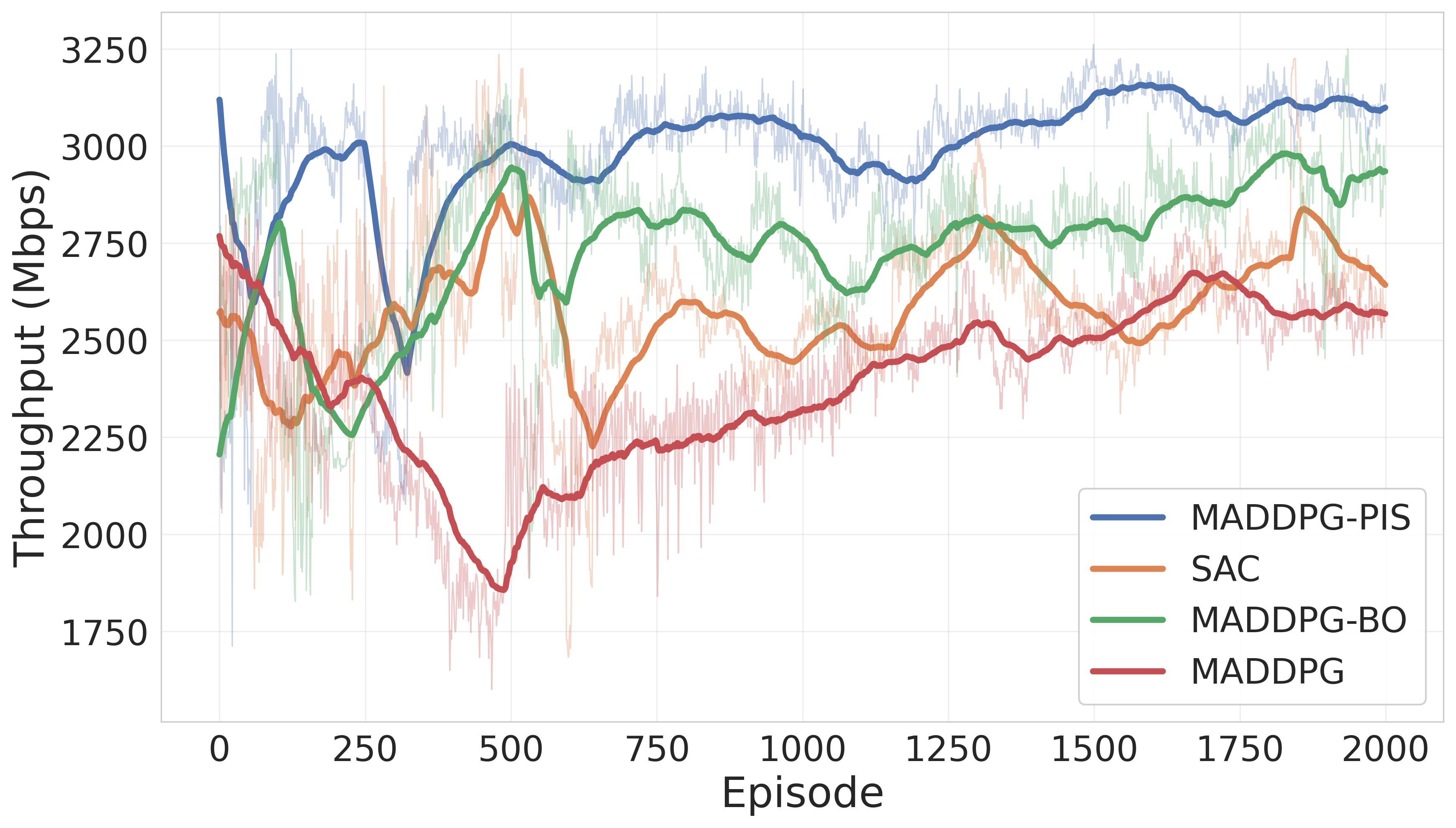}
        \label{fig:learning}
    }
    \hfil % Adds flexible space between images
    \subfloat[]{
        \includegraphics[width=0.32\textwidth]{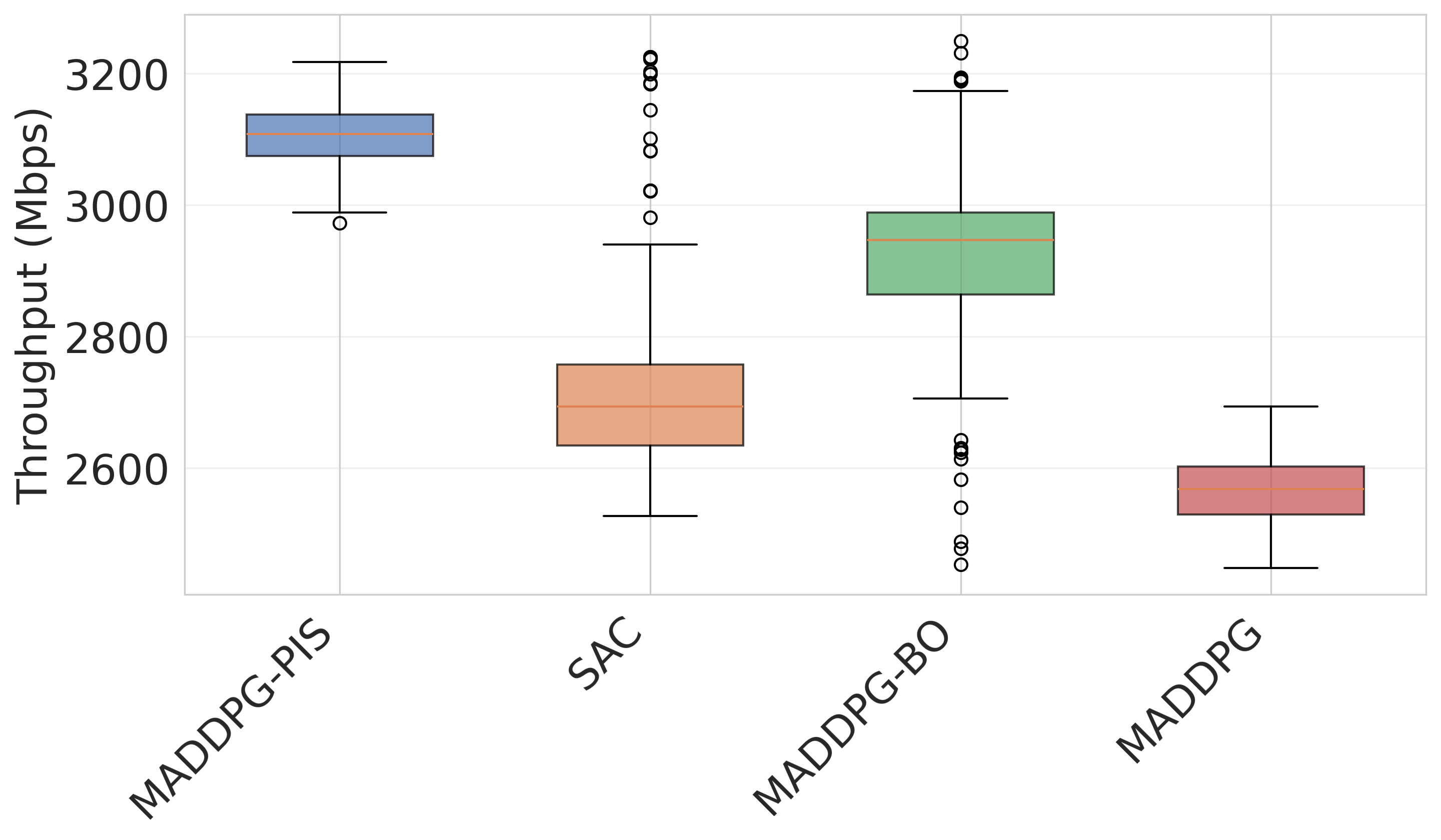}
        \label{fig:boxplot}
    }
    \hfil
    % Third Subfigure
    \subfloat[]{
        \includegraphics[width=0.29\textwidth]{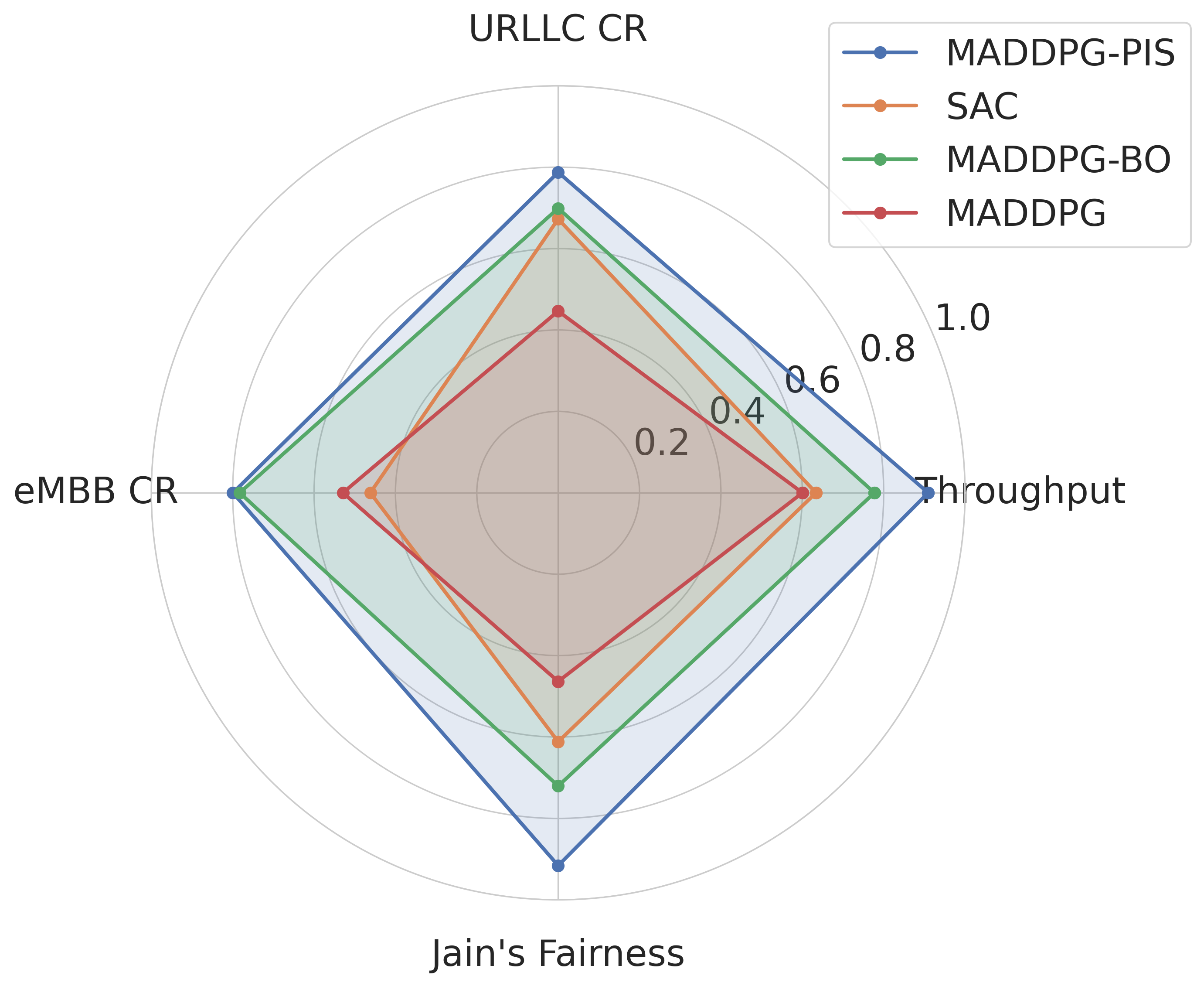}
        \label{fig:radar_chart}
    }
    
    \caption{(a) Episode-wise throughput(Mbps) (b) Throughput(Mbps) distribution (final 200 episodes) (c) Normalized performance comparison.}
    \label{fig:three_plots}
\end{figure*}
\vspace{-0.2cm}
\section{Simulation Results}
\label{simul_res}
% \subsection{Baselines:}
\noindent We compare our proposed scheme (MADDPG-PIS) with three other schemes. MADDPG~\cite{VanillaMADDPG2025}, which is a multi-agent actor-critic framework addressing environmental non-stationarity by incorporating the policies of other agents into a centralized critic. SAC~\cite{noma_2025}, which is an off-policy soft actor-critic algorithm designed to maximize computation throughput  jointly optimizing UAV trajectories and beamforming while maintaining URLLC awareness and training stability. MADDPG-BO~\cite{MADDPG_BO2025}, which is an enhanced multi-agent approach that integrates Bayesian optimization for efficient hyperparameter tuning to optimize UAV trajectories, trying to improve coverage in rural and urban environments. Moreover, we consider uniform-1000 which is a uniform sampling with $N=1000$ as accuracy baseline.  The simulation parameters we have used in the process are listed in Table \ref{tab:simulation_parameters}.
\begin{table}[h]
    \centering
    \caption{Simulation Parameters}
    \begin{tabular}{|c|c|}
        \hline
        \textbf{Parameter} & \textbf{Value} \\
        \hline
        Simulation area ($a\times a$ $m^2$) & $1500m \times 1500m$ \\
        Number of users ($N_{\text{users}}$) & 30\;(URLLC), 200\;(eMBB)\\
        Number of UAVs ($N_{\text{UAV}}$) & 5 \\
        UAV communication range & 300 m \\
        % Coverage Radius ($R$) & $46m$ \\
        Decision interval ($\Delta t$) & $10$ sec \\
        Transmission power ($P_i$) & $30$ dBm ~\cite{Heterogenous2025} \\
        UAV maximum speed ($v_{max}$) & 30 m/s     ~\cite{MADDPG_BO2025} \\
        % User maximum speed ($u_{uav}$) &   \\
        Mixture weight ($\alpha$) &   0.6      \\
        LSTM hidden dimension($h_t$) & 128 \\
        Buffer size & 50000 \\
        Rician fading factor & $2$ ~\cite{noma_2025} \\
        Altitude of UAV & $[22 \text{ m}, 150 \text{ m}]$ ~\cite{3D_Analysis2022} \\
        Carrier frequency & $73$ GHz \cite{Heterogenous2025} \\
        LoS link parameters & $\alpha_{pl} = 69.8$, $\beta_{pl} = 2$, $\sigma_0 = 3.1$ \cite{Heterogenous2025} \\
        % NLoS link parameters & $\alpha = 82.7$, $\beta = 2.69$, $\sigma_0 = 8.7$ \cite{ref16} \\
        
        \hline
    \end{tabular}
    \label{tab:simulation_parameters}
\end{table}
\begin{table}[t]
\centering
\caption{Performance Comparison}
\label{tab:results}
% \resizebox{\columnwidth}{!}{...} scales the table to the text width
\resizebox{\columnwidth}{!}{%
\begin{tabular}{lccccc}
\toprule
Method & Throughput & URLLC & eMBB & Energy & Jain's \\
 & (Mbps) & CR(\%) & CR(\%) & (kJ) & index \\
\midrule
Proposed method & \textbf{3127.85} & \textbf{74.1} & \textbf{61.7} & 116.1 & \textbf{0.71} \\
MADDPG+BO & 2906.49 & 64 & 59.8 & 123.6 & 0.61 \\
SAC & 2724.62 & 63.8 & 51.5 & 130.3 & 0.59 \\
MADDPG & 2563.03 & 55.2 & 51.5 & 92.1 & 0.52 \\
\bottomrule
\end{tabular}%
}
\end{table}

Fig.~\ref{fig:three_plots} and Table~\ref{tab:results} present a comprehensive evaluation of the proposed MADDPG-PIS framework against baseline methods across convergence, stability, and multi-objective performance dimensions.
The episode-wise throughput trajectories in Fig.~\ref{fig:learning}) reveal learning dynamics over 2000 training episodes. Raw measurements (transparent lines) exhibit inherent stochastic variance, while the smoothed curves highlight convergence trends. MADDPG-PIS demonstrates superior asymptotic performance, converging to $\sim$3100 Mbps by episode 1500. Notably, MADDPG exhibits prolonged initial instability (episodes 0-500) before gradual convergence, whereas PIS-guided exploration accelerates early-stage learning due to optimistic initialization. 
Box plots (Fig.~\ref{fig:boxplot}) characterize steady-state performance distributions over the final 200 episodes. The interquartile range (IQR) quantifies policy stability where MADDPG-PIS exhibits the tightest distribution (IQR = 62.89 Mbps), indicating consistent performance with minimal variance. In contrast, SAC, MADDPG-BO, MADDPG has IQR (Mbps), 123.43, 124.57 and 73.76 respectively. The median line position within each box confirms that MADDPG-PIS maintains higher central tendency without extreme deviations.
The radar chart (Fig.~\ref{fig:radar_chart}) normalizes four critical metrics to [0,1] for comparative visualization. MADDPG-PIS achieves the largest enclosed area, demonstrating balanced excellence across multiple competing objectives. Specifically, it dominates in throughput, URLLC coverage ratio  and Jain's Fairness index, while maintaining eMBB coverage. The rhomboid shape of baseline MADDPG reveals its bias toward lower fairness at the expense of reliability. MADDPG-BO shows improved coverage over SAC but lags in fairness index.
Absolute performance metrics over converged episodes (Table~\ref{tab:results}) confirm MADDPG-PIS has a upper hand: 3127.85 Mbps throughput (+22\% vs. vanilla MADDPG, +14.8\% vs. SAC and +7.6\% vs. MADDPG-BO), 74.1\% URLLC CR (+10.3\% vs. MADDPG-BO), and 0.71 Jain's index (+16.4\% vs. MADDPG-BO). 
% \subsection{Ablation Study}
% Table~\ref{tab:ablation} shows systematic parameter variation in our ablation study. The proposed model ($\alpha=0.6, N=100$) achieves optimal performance: 3097 Mbps throughput, 72\% URLLC coverage, and 0.57 ms latency per user. Increasing $\alpha$ to 0.9 (aggressive reliance on prediction) degrades performance to 2219 Mbps and 50.6\% CR due to insufficient uniform fallback exposing prediction error risk. Conversely, reducing $\alpha$ to 0.3 (conservative mixture) yields robust but still inefficient performance (2376 Mbps, 52.8\% CR) by under-utilizing predictive information. Uniform sampling ($N=1000$) achieves comparable performance (2998 Mbps, 68.1\% CR) but requires 11 ms latency which is 20 $\times$ slower than PIS, making it impractical for real-time control. This ablation directly confirms PIS's core value proposition of achieving equivalent accuracy with 20 $\times$ speedup through intelligent importance sampling.

Table~\ref{tab:ablation} presents our ablation study. The proposed ($\alpha=0.6, N=100$) achieves optimal performance (3097 Mbps, 72\% URLLC CR, 0.57 ms latency) with estimator variance $5.2 \times 10^{-5}$, marginally higher than Uniform ($N=1000$) at $4.8 \times 10^{-5}$. This is expected since Theorem \ref{thm:variance} guarantees variance reduction at equal $N$, not across different sample budgets. Notably, the $10\times$ sample reduction causes only a marginal increase rather than proportional increase that naive uniform subsampling would incur, empirically demonstrating PIS's superior per-sample efficiency. High $\alpha=0.9$ degrades throughput to 2219 Mbps and 50.6\% CR as insufficient uniform fallback exposes prediction errors, with inflated variance ($29.8 \times 10^{-5}$) reflecting unbounded importance weights under MDN misprediction. Low $\alpha=0.3$ under-utilizes prediction, yielding 2376 Mbps and 52.8\% CR with variance ($19.1 \times 10^{-5}$) as the uniform fallback dominates. Uniform ($N=1000$) achieves comparable accuracy but at 11 ms latency which is $20\times$ slower than PIS thereby rendering it impractical for real-time control. 
% This ablation directly confirms PIS's core value proposition of achieving equivalent accuracy with $20\times$ speedup through intelligent importance sampling.

% Table~\ref{tab:ablation} identifies the contribution of PIS components by varying parameters in a systematic way. In the rare event regime ($P_f$ = $4.7\times10^{-5}$), standard uniform sampling with N=2000 samples produce $11$ ms latency averaged over all users. Our PIS method achieves comparable accuracy (86.42\%) using only N=100 samples in 20ms, demonstrating 20$\times$ computational speedup while maintaining estimation quality for real-time URLLC coverage verification.

% In the rare event regime ($P_f$ = $4.7\times10^{-5}$), standard uniform sampling requires N=2000 samples to achieve 81.54\% relative error with 391ms latency averaged over all users. Our PIS method achieves comparable accuracy (86.42\%) using only N=100 samples in 20ms, demonstrating 20$\times$ computational speedup while maintaining estimation quality for real-time URLLC coverage verification.

The convergence of evidence across visualization modalities establishes that PIS-guided coverage verification enables MADDPG to simultaneously optimize multiple conflicting objectives while maintaining policy stability, confirming theoretical variance reduction guarantees derived in Theorems (\ref{thm:unbiased} \& \ref{thm:variance}).
\vspace{-0.3cm}
% \begin{table}[h!]
% \centering
% \caption{Effect of PIS Parameters}
% \label{tab:ablation}
% \begin{tabular}{lccccc}
% \toprule
% Configuration & Throughput & URLLC & Latency per \\
%  & (Mbps) & CR(\%) & user(ms) \\
% \midrule
% Proposed ($\alpha=0.6$) & \textbf{3097} & \textbf{72} & 0.57\\
% High $\alpha$ ($\alpha=0.9$) & 2219 & 50.6 & 0.42\\
% Low $\alpha$ ($\alpha=0.3$) & 2376 & 52.8 & 0.64\\
% % Uniform ($N=50$) & 2345 & 48.3 & 20 \\
% Uniform ($N=1000$) & 2998 & 68.1 & 11\\
% \bottomrule
% \end{tabular}
% \end{table}

\begin{table}[htbp]
\centering
\caption{Effect of PIS Parameters}
\label{tab:ablation}
% Uncomment the \resizebox line below if the table is still wider than your column margin
\resizebox{\columnwidth}{!}{%
\begin{tabular}{lcccc}
\toprule
Configuration & \makecell{Throughput \\ (Mbps)} & \makecell{URLLC \\ CR(\%)} & \makecell{$\text{Var}(\hat{P}_f)$ \\ $(\times 10^{-5})$} & \makecell{Latency per \\ user(ms)} \\
\midrule
Proposed ($\alpha=0.6$) & \textbf{3097} & \textbf{72} & 5.2 & 0.57 \\
High $\alpha$ ($\alpha=0.9$) & 2219 & 50.6 & 29.8 & 0.42 \\
Low $\alpha$ ($\alpha=0.3$) & 2376 & 52.8 & 19.1 & 0.64 \\
Uniform ($N=1000$) & 2998 & 68.1 & 4.8 & 11 \\
\bottomrule
\end{tabular}
} % Uncomment this closing brace if you use \resizebox
\end{table}

% \begin{table}[h!]
% \centering
% \caption{Effect of PIS Parameters}
% \label{tab:ablation}
% \begin{tabular}{lcccc}
% \toprule
% Configuration & Throughput & URLLC & Time per & Relative\\
%  & (Mbps) & CR(\%) & user(ms) & error(\%)\eqref{rel_err}\\
% \midrule
% Proposed ($\alpha=0.6, N=100$) & \textbf{3097} & \textbf{72} & 0.087 &  86.42\\
% High $\alpha$ ($\alpha=0.9$) & 2376 & 52.6 & 0.072 & 219.73\\
% Low $\alpha$ ($\alpha=0.3$) & 2219 & 50.8 & 0.087 & 108.73\\
% % Uniform ($N=50$) & 2345 & 48.3 & 20 \\
% Uniform ($N=1000$) & 2998 & 68.1 & 2 & 81.54\\
% \bottomrule
% \end{tabular}
% \end{table}
% \subsection{The Greedy dilema}

% While Greedy achieves competitive throughput (2699.78 Mbps), this metric is misleading due to Shannon capacity's logarithmic relationship with SNR. A UAV hovering directly over a single user achieves higher aggregate throughput than serving multiple users from a central location (the "centroid trap"). Greedy exploits this by serving only nearest users with high-quality links while abandoning distant ones, resulting in high peak rates but poor fairness and lower coverage (37.76\% vs PIS's 60.52\%). In contrast, PIS-MADDPG optimizes for fair coverage across all users—a critical URLLC requirement that aggregate metrics fail to capture. This analysis underscores the importance of evaluating both throughput and coverage rate jointly, rather than relying on throughput alone as a performance indicator.
\vspace{-0.2cm}
\section{Conclusion}
\label{conclusion}
\noindent We presented a PIS based theoretically-grounded framework for sample-efficient coverage verification in multi-UAV networks. By combining LSTM-MDN trajectory prediction with defensive mixture importance sampling, we achieve provable variance reduction while maintaining unbiasedness. Integration with MADDPG enables real-time proactive control for multi-UAV trajectory planning. Experimental validation demonstrates the superiority of our proposed approach over existing approaches like vanilla MADDPG, SAC, MADDPG-BO and pure uniform sampling. Future work includes adaptive mixture control (learning $\alpha$ online based on prediction uncertainty), extension to 3D mobility for aerial users, and hardware validation on real UAV testbeds. Moreover, the proposed PIS framework can be generalized beyond UAV networks to any mobile wireless system requiring sample-efficient spatial coverage verification.

\end{document}